\title{A Variant of Earley Deduction With \mbox{Partial Evaluation}}
\titlerunning{A Variant of Earley Deduction}
\author{Heike Stephan \and Stefan Brass} 
\authorrunning{H. Stephan \and S. Brass}
\institute{Martin-Luther-Universit\"at Halle-Wittenberg,
	Institut f\"ur Informatik,\\
	Von-Seckendorff-Platz~1, D-06099 Halle (Saale), Germany\\
	\email{stephan@informatik.uni-halle.de}, \email{brass@informatik.uni-halle.de}}
\begin{document}
\maketitle
\begin{abstract}
We present an algorithm
for query evaluation
given a logic program consisting of function-free Datalog rules. 
It is based on Earley Deduction~\cite{PW83,Por86}
and uses a partial evaluation
similar to the one we developed for our SLDMagic
method~\cite{Bra00}. With this, finite automata modeling the evaluation of given queries are generated.
In certain cases,
the new method is more efficient than SLDMagic
and the standard Magic Set method since it can process several deduction steps as one.
\end{abstract}

%=====================
\section{Introduction}
%=====================
\label{secIntro}
The goal of deductive database systems is
to offer integrated systems
which permit to do programming and database tasks
in a single, declarative language.
This would improve the current situation
in which several languages are mixed, e.\,g.\@ Java and SQL.
While SQL is declarative
and has successfully shown the advantages of declarative languages,
the language used for application programming is usually non-declarative.

Whereas in earlier times
deductive database research was concentrated
only on recursive query evaluation,
now new applications,
e.\,g.~for the semantic web,
are in the focus.
But even so mundane tasks as the generation of web pages
must be considered if deductive databases
should be used for real world application programming.
In~\cite{Bra12} we made a proposal
for declarative output
and also investigated
how sorting can be integrated in Datalog, which is obviously important for output
and also for database queries.

However,
even the classical task of bottom-up query evaluation
deserves more research in order to improve efficiency~\cite{Bra10CPP}. 
The new method presented in this paper  
loops through a sequence of states
(sets of rules being processed). From one state to the next, the successor state is determined by a single database fact and the preceding rule set. At compilation time, when the database state is not yet known, partial evaluation of the program is done by using facts with abstract values. The method is based on Earley Deduction~\cite{PW83,Por86} which exploits the similarity of context free grammar rules and rules of logic programs.
The partial evaluation
is similar to the one we developed for our SLDMagic
method~\cite{Bra00}.
It makes the algorithm very competetive
because program analysis and abstract execution can already be done at compilation time.

The algorithm is also interesting
because it especially fits applications
in which input must be parsed
(after all, the Earley algorithm is a parsing algorithm).
An input text, e.\,g.~{\tt"abc"}, can be represented by Datalog facts
as follows:
\begin{quote}
\begin{tt}
\begin{tabular}{@{}l@{}}
input(1, a, 2).\\
input(2, b, 3).\\
input(3, c, 4).\\
eof(4).
\end{tabular}
\end{tt}
\end{quote}
This is very similar to the standard difference list technique
for definite clause grammars,
but since Datalog has no lists we use positions in the input.

First we give some basic definitions (Section \ref{sec:basic}), then we describe the basic deducion method (Section \ref{sec:deduct}) and finally present the partial evaluation (Section \ref{sec:partEval}).
%=============================
\section{Basic Definitions}
%=============================
\label{sec:basic}
\begin{definition}[Rule]
A rule is a formula of the form $\litA \leftarrow \litB_{1} \andUntil \litB_{n}$ where $\litA$ and $\litB_{i}, i = 1, \ldots, n,$ are positive literals, i.\,e. atomic formulas $\id p(t_{1}, \ldots, t_{m})$ with a predicate $\id p$ of arity $m$ and terms $t_{j}, j = 1, \ldots, m$. Terms are variables or constants. In the above rule, $\litA$ is called the head and $\litB_{1} \andUntil \litB_{n}$ is called the body. A rule with empty body (i.\,e. $n = 0$) and without variables is called a fact.
\end{definition}

In the context of deductive databases, the \emph{range restriction} condition ensures that no derived fact contains variables. In the remaining paper, this condition is assumed to be satisfied for every rule.

\begin{definition}[Range Restriction]
A rule is range restricted iff every variable that appears in the head appears also in the body.
\end{definition}

\begin{definition}[EDB- and IDB-Predicates, Program, and Database]
Predicates are partitioned into EDB (``extensional database'') predicates defined by facts and IDB (``intensional database'') predicates defined by rules. A logic program is a finite set of rules with an IDB predicate in the head and at least one body literal. A database is a finite set of facts with EDB predicate.
\end{definition}

The requirement that the body of a program rule is non-empty simplifies later definitions but is no restriction: One can use a special EDB predicate \id{true} without arguments in the rule body.

\begin{definition}[Answer Predicate, Goal Rule, and Query]
We assume that an IDB predicate $\id{answer}$ is distinguished as ``main'' predicate. It must not appear in the body of a program rule. A rule with the predicate $\id{answer}$ in the head is called goal rule and represents a query to the logic program.
\end{definition}

The goal of query evaluation is to determine
the $\id{answer}$-facts which are derivable
from program and database together.

%=========================
\section{Deduction Method}
%=========================
\label{sec:deduct}

The deduction method uses sequences of states to compute facts of the answer relation.

\begin{definition}[Rule Normalization]
Let $\mathit{VAR}$ be the set of variables of a rule $\Rule$, and let this set be ordered by the occurrence of its elements in $\Rule$: $\mathit{VAR} = \lbrace V_{i} \mid i \in \lbrace 0, \ldots , |\mathit{VAR}|-1\rbrace\rbrace$ where $V_{j} < V_{k}$ iff the first occurrence of $V_j$ is before the first occurrence of $V_k$ and $j < k$ iff $V_{j} < V_{k}$ ($j, k \in \lbrace 0, \ldots , |\mathit{VAR}|-1\rbrace$). Let further $\mathcal{X} = \lbrace \id{X}_{i} \mid i \in \bbbn \cup \lbrace 0 \rbrace \rbrace$ be an ordered set of variables: $\id{X}_{j} < \id{X}_{k}$ iff $j < k$ ($j, k \in \lbrace \bbbn \cup \lbrace 0\rbrace$). Then rule $R$ is normalized by substituting every $V_{i} \in \mathit{VAR}$ by $\id{X}_{i} \in \mathcal{X}$ ($i \in \lbrace 0, \ldots , |\mathit{VAR}|-1\rbrace$).
\end{definition}

\begin{definition}[State]
A state is a set of normalized rules.
\end{definition}

\begin{definition}[Selection Function]
A selection function chooses for every rule $\litA\lif\litB_1\until\litB_n$
with $n\geq 1$
an index~$i\in\lbrace 1\until n\rbrace$
(i.\,e.~a body literal).
\end{definition}

In every rule of a state one body literal is selected; for simplicity of presentation it is assumed that this is the leftmost body literal. However, we note that in the database context the selection function is an important optimization parameter. 
So a real implementation will use a selection function
that tries to make use of input constants
and possibly existing indexes or other database access structures.

From the rules in a state new rules are derived by two basic derivation steps that are already described in \cite{PW83}, a ``downward'' \emph{instantiation} and an ``upward'' \emph{reduction} step. In a way, this can be viewed as splitting up the SLD\--re\-so\-lu\-tion step, which avoids deriving rules with arbitrary length. A derived rule is first normalized before added to a state.

If the selected literal of a rule in the state unifies with the head of a program rule, an \emph{instance} of the program rule is derived by renaming all variables in the program rule and applying the most general unifier of the selected literal and the program rule head to the program rule. Thus, instantiation corresponds to calling an IDB predicate as in Prolog's four port box model. Several instances can be derived from the same rule. 

\begin{definition}[Instance, Instantiation] 
	Let $\Rule = \litA \leftarrow \litB_1 \andUntil \litB_{n}$ be a rule of the program and $\id{K} \leftarrow\id{L}_{1}\andUntil \id{L}_{m}, m > 0,$ a rule in the state with selected literal $\id{L}_{1}$. Let $\Rule'$ be the rule resulting from $\Rule$ by a renaming of variables so that no variable in $\Rule'$ occurs in a rule in the state, i.\,e. there is a substitution $\theta$ so that $\Rule' = \doSubst{\theta}{\Rule} = \litA' \leftarrow \litB'_{1} \andUntil \litB'_{n}$. A rule $\Rule''$ is an instance of $\Rule$ iff $\id{L}_{1}$ and $\litA'$ are unifiable with most general unifier $\sigma$ and $\Rule'' = \doSubst{\sigma}{\Rule'}$. 
\end{definition}

A reduction is performed with a fact, either of the database or a derived one. If there is a derived rule with a selected literal that unifies with the fact, this rule is \emph{reduced} by the fact and a new rule, the \emph{reduct}, is created by applying the most general unifier and removing the selected literal. When the last body literal is removed by reduction,
an IDB fact results. Thus, reduction is a special case of resolution
with a fact. Again, one fact can be used for several reductions.

\begin{definition}[Reduct, Reduction]
	Given a derived rule $\Rule = \litA \leftarrow \litB_1 \wedge \litB_{2} \andUntil \litB_{n}$ with selected literal $\litB_{1}$ and a fact $\fact$ in the database or in the state, the rule $\Rule'$ is a reduct of $\Rule$ iff $\litB_{1}$ and $\fact$ are unifiable with most general unifier $\sigma$ such that $\Rule' = \doSubst{\sigma}{(\litA \leftarrow \litB_{2} \andUntil \litB_{n})}$.
	The corresponding derivation step is called reduction, $\fact$ reduces $\Rule$ to $\Rule'$.
\end{definition}

\begin{definition}[Initial State]
The initial state consists of the goal rule and all rules that can be iteratively derived by instantiation.
\end{definition}

At a state transition, exactly one EDB fact is used to compute the successor state.

\begin{definition}[Dependency-Relation of Rules]
A rule $\Rule$ depends directly on a rule $\Rule'$
iff the selected literal in $\Rule$ is unifiable
with the head literal of $\Rule'$. 
A rule $\Rule$ depends on a rule $\Rule'$
with respect to a state $\mathcal{S}$
iff there are rules $\Rule_1, \ldots, \Rule_n \in \state$
such that $\Rule_1 = \Rule$,
each $\Rule_i, i=1, \ldots, n-1,$ depends directly on $\Rule_{i+1}$,
and $\Rule_n$ depends directly on $\Rule'$
(note that $\Rule'$ does not have to be contained in $\state$).
\end{definition}

\begin{definition}[Successor State]
\label{def:succState}
Let a program~$\prog$, a database~$\db$,
a state~$\state$,
and a fact~$\fact\in\db$ be given.
The successor state~$\state'$ is constructed as follows:
\begin{enumerate}
\item
First,
$\state'$ is initialized
with all rules that result from reduction
applied to rules in~$\state$ with fact~$\fact$.
If the result is empty,
there is no successor state.
\item
If $\state'$ now contains IDB facts,
reduction is applied repeatedly
to rules in $\state$ with facts in~$\state'$
and the results are inserted into~$\state'$
until nothing changes.
\item
Then instantiation is applied iteratively
to each rule~$\Rule\in\state'$ with a selected IDB-literal.
All instances are added to the successor state.
\item
Finally,
rules~$\Rule\in\state$ 
that depend (with respect to~$\state$) 
on a rule with at least one body literal in~$\state'$ 
are copied to~$\state'$. The copied rules are those that still have a chance of being reduced by an IDB fact.
\end{enumerate}
\end{definition}

\begin{definition}[State Sequence]
States $\state_1, \ldots, \state_n$ form a state sequence iff every $\state_{i+1}$ is the successor state for $\state_i$ and a fact $\fact_i$ of the database, $i=1, \ldots, n-1$.
\end{definition}

\begin{definition}[Computed Answers]
A fact~$\answerPred(c_1, \ldots, c_m)$ is computed
if there is a state sequence~$\state_1, \ldots, \state_n$
such that $\state_1$ is the initial state and
$\answerPred(c_1, \ldots, c_m)\in\state_n$.
\end{definition}

There can only be finitely many different states
for a given program~$\prog$ and database~$\db$ for the following reasons:
\begin{itemize}
\item The state contains only predicates
and constants occurring in the finite set~$\prog\union\db$.
\item No derived rule can become longer than the longest program rule.
\item A state does not contain two rules that differ only in the names of their variables.
\end{itemize}
However,
the state sequence could be cyclic,
so one must check whether a newly constructed state
is indeed new. 
Of course,
optimizations are possible and subject of our further research. 

\begin{example}
\label{ex:1}
Let the left recursive version
of the standard transitive closure program be given:
\begin{quote}
$\begin{array}{@{}rlcl@{}}
\lineno{1}&\path(\varA,\varB)&\lif&\edge(\varA,\varB).\\
\lineno{2}&\path(\varA,\varB)&\lif&\path(\varA,\varC)\land\edge(\varC,\varB).
\end{array}$
\end{quote}
Let the database be
\begin{quote}
$\begin{array}{@{}rl@{}}
\lineno{3}&\edge(1,2).\\
\lineno{4}&\edge(2,3).
\end{array}$
\end{quote}
Now let the following goal rule be given:
\begin{quote}
$\begin{array}{@{}rlcl@{}}
\lineno{5}&\answerPred(\varA)&\lif&\path(1,\varA).
\end{array}$
\end{quote}
The initial state~$\state_0$ consists of the goal rule
plus rules added by instantiation:
\begin{quote}
$\begin{array}{@{}rlcll@{}}
\lineno{6}&\answerPred(\varA)&\lif&\path(1,\varA).
	&\comment{goal \lineref{5}}\\
\lineno{7}&\path(1,\varA)&\lif&\edge(1,\varA).
	&\comment{inst.~of~\lineref{1} because of~\lineref{6}}\\
\lineno{8}&\path(1,\varA)&\lif&\path(1,\varB)\land\edge(\varB,\varA).
	&\comment{inst.~of~\lineref{2} because of~\lineref{6}}
\end{array}$
\end{quote}
Rule \lineref{8} also calls for instantiation
but that gives again~\lineref{7} and~\lineref{8}.

Now there is only one database fact, $\edge(1,2)$, that leads to a successor state, and by reducing with this fact we reach state~$\state_1$:
\begin{quote}
$\begin{array}{@{}rlcll@{}}
\lineno{9}&\path(1,2).&&
	&\comment{Reduction of~\lineref{7} with \lineref{3}}\\
\lineno{10}&\answerPred(2).&&
	&\comment{Reduction of~\lineref{6} with \lineref{9}}\\
\lineno{11}&\path(1,\varA)&\lif&\edge(2,\varA).
	&\comment{Reduction of~\lineref{8} with \lineref{9}}\\
\lineno{12}&\answerPred(\varA)&\lif&\path(1,\varA).
	&\comment{Copy of~\lineref{6} because of~\lineref{11}}\\
\lineno{13}&\path(1,\varA)&\lif&\path(1,\varB)\land\edge(\varB,\varA).
	&\comment{Copy of~\lineref{8} because of~\lineref{11}}\\
\end{array}\kern-10pt$
\end{quote}
Again, reduction with only one database fact, $\edge(2,3)$, is possible and
gives the state~$\state_2$:
\begin{quote}
$\begin{array}{@{}rlcll@{}}
\lineno{14}&\path(1,3).&&
	&\comment{Reduction of~\lineref{11} with \lineref{4}}\\
\lineno{15}&\answerPred(3).&&
	&\comment{Reduction of~\lineref{12} with \lineref{14}}\\
\lineno{16}&\path(1,\varA)&\lif&\edge(3,\varA).
	&\comment{Reduction of~\lineref{13} with \lineref{14}}\\
\lineno{17}&\answerPred(\varA)&\lif&\path(1,\varA).
	&\comment{Copy of~\lineref{12} because of~\lineref{16}}\\
\lineno{18}&\path(1,\varA)&\lif&\path(1,\varB)\land\edge(\varB,\varA).
	&\comment{Copy of~\lineref{13} because of~\lineref{16}}\\
\end{array}\kern-15pt$
\end{quote}
No more reductions with database facts can be applied to rules in $\state_2$.
\qed
\end{example}

\begin{theorem}[Correctness]
Let a program $\prog$ and a database $\db$ be given. Every computed answer is indeed a logical consequence
of $\prog\union\db$.
\end{theorem}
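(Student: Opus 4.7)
The plan is to prove by induction on the length of the state sequence that every rule belonging to any reachable state is a logical consequence of $\prog\union\db$. Because any computed answer $\answerPred(c_1,\ldots,c_m)$ is, by definition, contained in some state of a sequence starting at the initial state, the theorem follows immediately from this invariant.

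For the base case I would verify the claim for the initial state $\state_0$. The goal rule belongs to $\prog$ and so is trivially a consequence. Every further rule in $\state_0$ arises by iterated instantiation of a program rule, so I would first establish the auxiliary observation that if $\Rule''=\doSubst{\sigma}{\doSubst{\theta}{\Rule}}$ is an instance of $\Rule\in\prog$, then $\Rule''$ is a consequence of $\Rule$; this is immediate because program rules are implicitly universally quantified and $\Rule''$ is a substitution instance. Normalization is nothing but a renaming of universally quantified variables and therefore preserves logical equivalence, so it need not be tracked separately.

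For the inductive step I would assume that every rule in $\state_i$ is a logical consequence of $\prog\union\db$ and treat the four phases of the construction of $\state_{i+1}$ in Definition~\ref{def:succState} separately. The central auxiliary fact, used in phases~1 and~2, is the soundness of reduction: if $\Rule'=\doSubst{\sigma}{(\litA\lif\litB_2\andUntil\litB_n)}$ is obtained from $\Rule=\litA\lif\litB_1\andUntil\litB_n$ by a ground fact $\fact$ with mgu $\sigma$ of $\litB_1$ and $\fact$, then $\doSubst{\sigma}{\litB_1}=\fact$ (since $\fact$ is ground), hence $\doSubst{\sigma}{\Rule}$ has $\fact$ as its first body literal, and therefore $\lbrace\Rule,\fact\rbrace\models\Rule'$. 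With this observation, phase~1 preserves consequence (using $\fact\in\db$); phase~3 preserves consequence by the same instantiation argument as in the base case; and phase~4 merely copies rules already in $\state_i$, which are consequences by the outer hypothesis.

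The main obstacle I expect lies in phase~2, where the reduction closure mixes newly derived IDB facts with newly derived reducts of rules from $\state_i$. Here I would set up a nested induction on the iteration step at which each new rule or IDB fact enters $\state_{i+1}$, arguing that at every such step both the rule being reduced and the fact used for the reduction are already established as consequences of $\prog\union\db$; the soundness-of-reduction observation then propagates this property to the newly inserted rule. Once this bookkeeping is carried out, the desired invariant for $\state_{i+1}$ follows and the outer induction closes, yielding the theorem.
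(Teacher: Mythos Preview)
Your proposal is correct and follows essentially the same approach as the paper: both argue that instantiation and reduction (and trivially copying) are sound inference steps, so every derived rule---hence every computed answer---is a logical consequence of $\prog\cup\db$. The paper compresses this into a single sentence, whereas you spell out the induction on the state sequence and the nested iteration in phase~2; your added detail is sound but not a different idea.
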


\begin{proof}
This is easy:
Each step (reduction and instantiation)
is a logical consequence
of $\prog\union\db$ and the previously computed rules.
\qed
\end{proof}

\begin{theorem}[Completeness]
For every ground substitution $\theta$ such that \linebreak$\doSubst{\theta}{\answerPred(\id{X}_1,\ldots, \id{X}_q)}$ is a logical consequence of the program and the database, $\doSubst{\theta}{\answerPred(\id{X}_1,\ldots, \id{X}_q)}$ is computed.
\end{theorem}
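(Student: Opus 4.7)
The plan is to prove completeness by induction on the structure of a ground proof of the answer fact. By soundness and completeness of SLD-resolution on range-restricted Horn programs, if $\doSubst{\theta}{\answerPred(\id{X}_1,\ldots,\id{X}_q)}$ is a logical consequence of $\prog\union\db$, then there is a finite ground proof tree $\mathcal{T}$ whose root is this fact, whose leaves are facts of $\db$, and whose internal nodes are ground instances of goal rule or program rules. The goal is to show that the state sequence construction can simulate $\mathcal{T}$.

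The central technical device I would introduce is a \emph{lifting lemma}: if a normalized rule $\Rule = \litA\lif \litB_1\wedge\cdots\wedge\litB_n$ with selected literal $\litB_1$ occurs in some reachable state $\state_i$, and if for a ground substitution $\tau$ the literal $\doSubst{\tau}{\litB_1}$ has a ground proof from $\prog\union\db$, then there is an extension of the state sequence $\state_i,\state_{i+1},\ldots,\state_j$ (obtained by choosing appropriate database facts) and a substitution $\tau'$ at least as general as $\tau$ on the variables of $\Rule$ such that $\doSubst{\tau'}{(\litA\lif \litB_2\wedge\cdots\wedge\litB_n)}$ belongs to $\state_j$; and if $n=1$, the ground fact $\doSubst{\tau}{\litA}$ belongs to $\state_j$. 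This lemma is proved by induction on the height of the proof of $\doSubst{\tau}{\litB_1}$. In the base case, $\litB_1$ is an EDB literal and a single reduction with the appropriate database fact suffices. In the inductive case, the proof begins with a program rule $\Rule' = \litA'\lif \litB'_1\wedge\cdots\wedge\litB'_m$ whose head is instantiated to $\doSubst{\tau}{\litB_1}$; because the initial state is closed under instantiation and because every newly generated rule is also subject to instantiation in step~3 of Definition~\ref{def:succState}, a suitable renamed and unified instance of $\Rule'$ is present in $\state_i$, so the inductive hypothesis can be applied to each $\litB'_k$ in turn. After $m$ successful applications, $\doSubst{\tau}{\litA'} = \doSubst{\tau}{\litB_1}$ appears as an IDB fact in some state, and step~2 of the successor-state construction then reduces $\Rule$ on $\litB_1$.

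Completeness follows by applying the lifting lemma to the goal rule, which is in $\state_0$ by definition, with $\tau=\theta$ and $\litB_1$ the (unique) body literal $\path$\kern-4pt-style call of the goal. The main obstacle, and the part demanding the most care, will be the bookkeeping for step~4 of the successor-state construction: we must guarantee that every rule $\Rule$ we still need in the induction is \emph{not lost} between the state where the hypothesis places it and the (possibly much later) state in which the IDB fact reducing its selected literal finally becomes available. This is exactly what the dependency-based copying in step~4 is designed to preserve, but it must be verified as an invariant along the entire state sequence, using the transitive definition of the dependency relation. A second subtlety, handled as in standard SLD lifting arguments, is the coordination of the independently chosen most general unifiers in the state sequence with the ground substitution $\tau$ coming from $\mathcal{T}$; the usual argument shows that the composition of the mgus produced along the way is at least as general as $\tau$ on the relevant variables, which is enough to carry the induction through and, at the top level, yields the desired ground answer fact.
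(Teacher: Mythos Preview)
Your proposal is essentially the same approach as the paper's: completeness is reduced to a lemma proved by induction on derivation depth (you use proof-tree height, the paper uses the number of $T_{\prog\cup\db}$-iterations---equivalent measures), with a secondary sweep over the body literals of the rule currently being processed, and the lemma is then instantiated with the goal rule in the initial state.

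One point worth sharpening: your lifting lemma, as stated, asserts only that the reduct of $\Rule$ appears in some later state $\state_j$. But in the inductive case you need more---after you have driven the instance of $\Rule'$ all the way to the IDB fact $\doSubst{\tau}{\litA'}$, the original rule $\Rule$ must still be present so that step~2 of Definition~\ref{def:succState} can reduce it. You flag this as ``bookkeeping for step~4'' to be maintained as an invariant, which is exactly right, but the induction will not close unless that invariant is part of the lemma statement itself. The paper does precisely this: its lemma additionally asserts that every rule in $\state_0$ that depends on $\Rule$ is carried along through all intermediate states $\state_1,\ldots,\state_{m-1}$. With that strengthening your argument and the paper's coincide; without it, the lemma is too weak to feed back into itself at the point where you write ``step~2 of the successor-state construction then reduces $\Rule$ on $\litB_1$.''
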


The completeness theorem is a corollary of the following lemma, if $\state_0$ is the initial state and the rule considered is the goal rule.

\begin{lemma}
Let a program $\prog$ and a database $\db$ be given. If a state~$\state_0$ contains a rule $\Rule = \litA\lif\litB_1\andUntil\litB_n$
and there is a ground substitution~$\subst$
such that each $\doSubst{\subst}{\litB_i}$ is a logical consequence
of $\prog\union\db$,
then there is a state sequence $\state_0, \state_1, \ldots, \state_m$ such that $\doSubst{\subst}{\litA}$
is contained in $\state_m$.
Furthermore,
any rule~$\Rule'\in\state_0$
that depends on $\Rule$
is contained in every state $\state_1, \ldots, \state_{m-1}$.
\end{lemma}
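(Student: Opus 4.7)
The plan is to prove the lemma by well-founded induction, carrying the second clause alongside the existence claim in the induction hypothesis. A suitable measure on pairs $(\Rule,\subst)$ is the multiset $\{h(\doSubst{\subst}{\litB_1}),\ldots,h(\doSubst{\subst}{\litB_n})\}$, ordered by the Dershowitz--Manna multiset extension of $<$ on $\bbbn$, where $h(g)$ is the minimum height of a proof of a ground fact $g$ from $\prog\union\db$. In the base case $n=0$, range restriction forces $\litA$ to be ground, so $\doSubst{\subst}{\litA}=\litA\in\state_0$ and $m=0$ works; the preservation clause is vacuous.

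For the inductive step, let $\litB_1$ be the selected literal. If its predicate is EDB, then $\doSubst{\subst}{\litB_1}\in\db$, and using this fact for the transition $\state_0\to\state_1$ yields a reduct $\Rule_1\in\state_1$ via an mgu $\sigma$; the factorization $\subst=\sigma\cdot\subst'$ on the relevant variables supplies a witness substitution for $\Rule_1$ whose multiset is strictly smaller, so the inductive hypothesis applies directly. If $\litB_1$ is IDB, a minimum-height derivation of $\doSubst{\subst}{\litB_1}$ begins with some program rule $\id{H}\lif\id{H}_1\andUntil\id{H}_k$ and a substitution $\phi$ with $\doSubst{\phi}{\id{H}}=\doSubst{\subst}{\litB_1}$ and each $\doSubst{\phi}{\id{H}_i}$ a logical consequence of strictly smaller height. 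Closure of $\state_0$ under instantiation (a property of any state produced by the algorithm) provides a renamed, unified instance $\Rule'''$ of this program rule in $\state_0$; a combined substitution witnesses the hypothesis for $\Rule'''$, and its multiset is strictly smaller in Dershowitz--Manna order since the slot $h(\doSubst{\subst}{\litB_1})$ has been replaced by $k$ strictly smaller values. The IH then supplies a sub-sequence $\state_0,\ldots,\state_j$ with $\doSubst{\subst}{\litB_1}\in\state_j$, and the preservation clause of that IH keeps $\Rule$ alive throughout $\state_1,\ldots,\state_{j-1}$ (since $\Rule$ depends directly on $\Rule'''$). Then step~2 of the successor-state construction reduces $\Rule$ by the newly produced IDB fact, placing a rule $\Rule_1\in\state_j$ with $n-1$ body literals; a further application of the lemma starting from $\state_j$ finishes the construction of $\doSubst{\subst}{\litA}$.

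For the preservation clause, I would argue in parallel that any $\Rule'\in\state_0$ depending on $\Rule$ also depends, in the relevant state, on the rule to which the induction is being applied ($\Rule'''$ in the IDB case, $\Rule_1$ in the EDB case), so that the IH's own preservation clause carries $\Rule'$ through. The key observation enabling this is that unifiability is monotone under specialization: if a literal unifies with $\litA$, it also unifies with any instance of $\litA$, and in particular with the head of $\Rule_1$. The main obstacle will be the bookkeeping around step~4 of the successor-state construction: one must verify that at every intermediate transition some descendant of $\Rule$ still possesses a body literal in the new state, so that each $\Rule'$ qualifies for copying. This follows from the shape of the recursion — reductions shorten the active rule but create new ``live'' reducts — but pinning it down formally, in particular handling the interplay between reductions that consume body literals and copies that preserve ancestors, is where I expect the proof to require the most care.
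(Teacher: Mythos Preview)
Your induction scheme is equivalent to the paper's: the paper runs a nested induction, outer on $\max_i$ of the $T_{\prog\cup\db}$-stage at which $\doSubst{\subst}{\litB_i}$ first appears and inner on the number $n$ of body literals, whereas you fold both into one well-founded induction on the Dershowitz--Manna multiset of proof heights. The case split (EDB literal: reduce and recurse on the shorter rule; IDB literal: pass to an instance $\Rule'''\in\state_0$, recursively compute the IDB fact, reduce $\Rule$, then recurse on the reduct) matches the paper's argument step for step. So on the existence part of the lemma your proposal and the paper's proof are essentially the same.

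There is, however, a genuine error in your treatment of the preservation clause. The ``key observation'' you rely on is false: unifiability is \emph{not} monotone under specialization. For instance $p(a)$ unifies with $p(X)$ but not with the instance $p(b)$. Hence from ``the selected literal of some $\Rule^{(k)}$ in the dependency chain unifies with the head $\litA$ of $\Rule$'' you cannot conclude that it unifies with the head $\sigma\litA$ of the reduct $\Rule_1$, and the chain witnessing ``$\Rule'$ depends on $\Rule_1$'' that step~4 of Definition~\ref{def:succState} requires is simply not supplied by this argument. To be fair, the paper's own proof is equally terse at exactly this point (it just asserts that the relevant dependents are carried along into the next state), so you are not missing an argument that the paper makes explicit; but the justification you offered is not merely informal, it is incorrect. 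You rightly flagged step~4 as the place needing the most care; what is still missing is a correct reason why the \emph{particular} dependents that matter for the induction---those sitting on the call chain that will eventually consume $\doSubst{\subst}{\litA}$---continue to qualify for the copy at each transition.
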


\begin{proof}
Since the~$\doSubst{\subst}{\litB_i}$ are logical consequences
of~$\prog\union\db$,
they are contained in the least fixpoint of~$\id T_{\prog\union\db}$,
and because there are no function symbols,
this is reached after a finite number of iterations.
The proof is by induction on the maximum (over~$i$, $i=1,\ldots,n$)
of the number of steps needed to derive~$\doSubst{\subst}{\litB_i}$ with the $\id T_{\prog\cup\db}$-operator.

If this is~1,
all $\doSubst{\subst}{\litB_i}$ are contained in~$\db$. For proving the first step of induction, there is a second induction on $n$ (the number of body literals in rule $\Rule$). If this is 1, then $\Rule = \litA\lif\litB_1$, $\state_m = \state_1$ is the direct successor of $\state_0$ and contains $\doSubst{\theta}{\litA}$. $\state_0$ contains all rules depending on $R$. 
Now assume that the theorem is proven for rules $\Rule$ with $n$ body literals $\litB_i$, all of them with EDB predicate. Assume further that $\state_0$ contains $\litA\lif\litB_1\wedge\litB_2\andUntil\litB_{n+1}$. If $\litB_1$ is selected, there exists a direct successor state of $\state_0$ with the fact $\doSubst{\theta_1}{\litB_1}$ that contains $\doSubst{\theta_1}{(\litA\lif\litB_2\andUntil\litB_{n+1})}$ and all rules in $\state_0$ depending on this rule, 
where $\theta_1$ is $\theta$ restricted to the variables occurring in $\litB_1$. From the hypothesis of the second induction the theorem follows, and $m = n+1$.

Now assume that the theorem is proven for all cases where the body literals $\doSubst{\theta}{\litB_i}$ are derivable after at most $k$ steps of the $\id T_{\prog\cup\db}$-operator. This means that $\doSubst{\theta}{\litA}$ can be derived after $k+1$ steps and computed with a state sequence $\state_0, \ldots, \state_m$ of length $m+1$. 

For the induction step, suppose that all $\doSubst{\theta}{\litB_i}$ are derivable after at most $k+1$ steps of the $\id T_{\prog\cup\db}$-operator. Again, to prove the theorem there is an induction on the number of $\litB_i$ in $\Rule$. If this is 1, then $\Rule = \litA\lif\litB_1$. We consider only the case that $\litB_1$ is an IDB literal (the other case is already shown above). In this case, an instantiation is performed, so 
$\state_0$ contains a rule $\litB\lif\litC_1\andUntil\litC_l$ where $\litB$ unifies with $\litB_1$. This rule is either the new instance or a rule already present in the state and equal to the new instance. From the inductive hypothesis follows that $\doSubst{\theta_1}{\litB}$ (where $\theta_1$ is $\theta$ restricted to the variables occurring in $\litB$) can be computed with a state sequence $\state_0, \ldots, \state_m$
, and that $\state_{m-1}$ contains the rule $\Rule = \litA\lif\litB_1$ which depends on $\litB$. Since $\litB_1$ is the only body literal, $\theta_1 = \theta$. Thus, in state $\state_m$ a reduction with $\doSubst{\theta}{\litB}$ and $\litA\lif\litB_1\in\state_{m-1}$ can be performed so that $\state_m$ also contains $\doSubst{\theta}{\litA}.$

If $\state_0$ contains a rule $\litA\lif\litB_1\wedge\litB_2\andUntil\litB_n$ where $\litB_1$ is selected and an IDB literal, again it also contains a rule $\litB\lif\litC_1\andUntil\litC_l$. A state sequence $\state_0, \ldots, \state_{m'}$ can be computed where $\state_{m'}$ contains $\doSubst{\theta_1}{\litB}$, $\state_{m'-1}$ contains $\litA\lif\litB_1\wedge\litB_2\andUntil\litB_n$ and, after a reduction with the fact $\doSubst{\theta_1}{\litB}$, $\state_{m'}$ contains $\doSubst{\theta_1}{\litA\lif\litB_2\andUntil\litB_n}$ as well as all rules in $\state_{m'-1}$ depending on this rule. Finally, from both inductive hypotheses the theorem follows.
\qed
\end{proof}

%===========================
\section{Partial Evaluation}
%===========================
\label{sec:partEval}
Especially in database context, the facts of the extensional database might not be known before execution time, and as the aim is to compile a program beforehand, an abstraction from actual data values must be developed. For this purpose, abstract values taken from an infinite set of symbolic constant values, $\mathcal{V}$, are used instead of the data values that are known only at execution time. Constants in program rules and in the query are not substituted, so no symbolic value may occur in a program rule or in the query. Via a partial evaluation an automaton can be constructed that models the process of query evaluation.

We only need to redefine a state transition. For a given state, create a set with all selected EDB literals in the state that are not equal to each other. Two literals are considered equal if they differ only in the names of their variables. For every literal in this set there is a state transition assigned to it which is labeled with the literal. Choose a literal and substitute its variables with new symbolic values that have not yet been used elsewhere. The resulting ``symbolic fact'' represents all facts that could be obtained from a query to the corresponding EDB relation at execution time. Thus, a state transition can be viewed as a data retrieving interface. Now reduce all possible rules in the  
state with the symbolic fact and add the derived rules to 
a new state. 
If 
a state contains a fact of the answer relation, it is a \emph{final state}. 
In the same way as an EDB literal with symbolic values can be viewed as a representative of a set of facts, a state with symbolic values can be viewed as a representative of a set of states that depends on the actual data values.

\begin{example}
\label{ex:symb}
Consider again the transitive closure program with the same goal rule.
The database is irrelevant now, only name and arity of EDB predicates are needed. 
Let the set $\mathcal{V}$ of symbolic values be $\lbrace \id{c}_0, \id{c}_1, \ldots \rbrace$.
For the initial state~$\state_0$ there are no differences. 
\begin{quote}
$\begin{array}{@{}rlcll@{}}
\lineno{6}&\answerPred(\varA)&\lif&\path(1,\varA).
	&\\
\lineno{7}&\path(1,\varA)&\lif&\edge(1,\varA).
	&\\
\lineno{8}&\path(1,\varA)&\lif&\path(1,\varB)\land\edge(\varB,\varA).
	&
\end{array}$
\end{quote}
In rule [7], $\edge(1,\varA)$ is selected, so $\varA$ is substituted by the symbolic value $\id{c}_0$. A state transition labeled with $\edge(1,\varA)$ is created, and a transition with the symbolic fact $\edge(1,\id{c}_0)$ gives $\state_1$:
\begin{quote}
	$\begin{array}{@{}rlcll@{}}
		\lineno{9}&\path(1,\id{c}_0).&&
		&\comment{Reduction of~\lineref{7} with fact}\\
		\lineno{10}&\answerPred(\id{c}_0).&&
		&\comment{Reduction of~\lineref{6} with \lineref{9}}\\
		\lineno{11}&\path(1,\varA)&\lif&\edge(\id{c}_0,\varA).
			&\comment{Reduction of~\lineref{8} with \lineref{9}}\\
		\lineno{12}&\answerPred(\varA)&\lif&\path(1,\varA).
			&\comment{Copy of~\lineref{6} because of~\lineref{11}}\\
		\lineno{13}&\path(1,\varA)&\lif&\path(1,\varB)\land\edge(\varB,\varA).
			&\comment{Copy of~\lineref{8} because of~\lineref{11}}\\
	\end{array}\kern-10pt$
\end{quote}
From $\state_1$ the successor state $\state_2$ is reached by a transition with the symbolic fact $\edge(\id{c}_0,\id{c}_1)$, labeled with the literal $\edge(\id{c}_0,\varA)$.
\begin{quote}
	$\begin{array}{@{}rlcll@{}}
		\lineno{14}&\path(1,\id{c}_1).&&
		&\comment{Reduction of~\lineref{11} with fact}\\
		\lineno{15}&\answerPred(\id{c}_1).&&
		&\comment{Reduction of~\lineref{12} with \lineref{14}}\\
		\lineno{16}&\path(1,\id{X}_0)&\lif&\edge(\id{c}_1,\id{X}_0).
			&\comment{Reduction of~\lineref{13} with \lineref{14}}\\
		\lineno{17}&\answerPred(\id{X}_0)&\lif&\path(1,\id{X}_0).
			&\comment{Copy of~\lineref{12} because of~\lineref{16}}\\
		\lineno{18}&\path(1,\id{X}_0)&\lif&\path(1,\id{X}_1)\land\edge(\id{X}_1,\id{X}_0).
			&\comment{Copy of~\lineref{13} because of~\lineref{16}}\\
	\end{array}\kern-10pt$
\end{quote}
All following states resemble $\state_1$ but have symbolic constants $\id{c}_2, \id{c}_3, \ldots$ instead of $\id{c}_1$.
\qed
\end{example}

%---------------------------------
%\subsection{Automata Construction}
%---------------------------------
For programs without recursive rules, the above state construction algorithm works well. In the other cases, as in our example, there will be infinitely many states since there are infinitely many symbolic values to be used at state transitions. Nevertheless, a part of these cases can be handled by trying to find finite many \emph{equivalence classes} of states. It may be noticed that states are generated which have a similar structure but different symbolic values. Similar states generate again similar states because the same derivation steps are applied to similar sets of rules. Therefore they can be combined in one equivalence class of states.

\begin{definition}[Equivalent States]
Let two states $\state_1$ and $\state_2$ be given. Let further $\mathit{SYMB}_{1}$ be the set of symbolic values occurring in $\state_1$ and $\mathit{SYMB}_{2}$ the set of symbolic values occurring in $\state_2$.
$\state_1$ is equivalent to $\state_2$ iff a bijective mapping $\id{map}$ from $\mathit{SYMB}_{1}$ to $\mathit{SYMB}_{2}$ exists so that the state $\state_1'$ obtained by replacing every symbolic value $\id{v}$ in $\state_1$ by $\id{map}(\id{v})$ is equal to $\state_2$.
\end{definition}

The construction of an automaton with partial evaluation is straightforward. The states of this automaton represent equivalence classes of those states that are constructed during the derivation process. Consequently, when a state is constructed for which an equivalent state already exists, these states are fused to one state in the automaton. 

\begin{example}
\label{ex:equiv}
Consider again the transitive closure program with the goal rule and with symbolic values (Examples \ref{ex:1} and \ref{ex:symb}). 
It is visible that $\state_1$ and $\state_2$ are equivalent: if in $\state_2$ the symbolic value $\id{c}_1$ is mapped to $\id{c}_0$, both rule sets are equal. With this, the following state transition function results:
\begin{quote}
	\begin{tabular}{llll}
			$\delta$($\state_{0}\!$&, $\edge(1, \id{X}_0)$)&$=$&$\state_{1}$\\
			$\delta$($\state_{1}\!$&, $\edge(\id{c}_0, \id{X}_0)$)&$=$&$\state_{1}$\\
	\end{tabular}\\
\end{quote}
\qed
\end{example}

In certain cases it happens that arbitrarily many different rules with the same structure of constants and variables but different values are accumulated in 
one state. 
In these cases the process of partial evaluation and automata construction does not terminate. A part of these cases results from tail recursive program rules. The problem with a tail recursive rule is that, starting from the last literal of the rule, arbitrarily long instantiation chains are created which have to be kept in the state for reduction. 
These cases can be dealt with by introducing an additional derivation step and performing a \emph{resolution} step instead of an instantiation when processing the last literal of a rule. 

\begin{definition}[Extension of Deduction Method]
The algorithm described in Definition \ref{def:succState} is extended as follows:
\begin{enumerate}
\item
A reduction with a fact $\fact$ the predicate of which is an EDB predicate is not affected and performed as described above.
\item
Reductions with IDB facts are applied repeatedly, but only to rules $\Rule = \litA \leftarrow \litB_1 \wedge \litB_{2} \andUntil \litB_{n}$ where $n > 1$. 
\item
Instantiations are only applied to rules $\Rule = \litA \lif \litB_1 \andUntil \litB_n$ with selected literal $\litB_1$ where the predicate of $\litB_1$ is an IDB predicate and $n > 1$. Otherwise, if $n = 1$, new rules are derived by applying a resolution step to $\litA \lif \litB_1$ and program rules where the head literal unifies with $\litB_1$. The derivation step is therefore called \emph{last literal resolution}.
\item
If a rule $\Rule = \litA \lif \litB_1 \andUntil \litB_n$ depends on a rule $\Rule'$ in the successor state it is only copied to the successor state if $n > 1$.
\end{enumerate}
The initial state consists of the goal rule and all rules that can be iteratively derived by instantiation and last literal resolution. 
\end{definition}

\begin{example}
\label{ex:res}
Consider the tail recursive version of the transitive closure program:
\begin{quote}
$\begin{array}{@{}rlcl@{}}
\lineno{1}&\path(\varA,\varB)&\lif&\edge(\varA,\varB).\\
\lineno{2}&\path(\varA,\varB)&\lif&\edge(\varA,\varC)\land\path(\varC,\varB).
\end{array}$
\end{quote}
With the goal rule
\begin{quote}
$\begin{array}{@{}rlcl@{}}
\lineno{3}&\answerPred(\varA)&\lif&\path(1,\varA).
\end{array}$
\end{quote}
the initial state is:
\begin{quote}
	$\begin{array}{@{}rlcll@{}}
		\lineno{4}&\answerPred(\varA)&\lif&\path(1,\varA).
		&\comment{goal rule}\\
		\lineno{5}&\answerPred(\varA)&\lif&\edge(1,\varA).
		&\comment{last literal resolution of \lineref{4}}\\
		\lineno{6}&\answerPred(\varA)&\lif&\edge(1,\varB)\land\path(\varB,\varA).
		&\comment{last literal resolution of \lineref{4}}\\
	\end{array}\kern-10pt$
\end{quote}
A transition with the symbolic fact $\edge(1,\id{c}_0)$ gives $\state_1$:
\begin{quote}
	$\begin{array}{@{}rlcll@{}}
		\lineno{7}&\answerPred(\id{c}_0).&&
		&\comment{reduction of \lineref{5} with fact}\\
		\lineno{8}&\answerPred(\varA)&\lif&\path(\id{c}_0,\varA).
		&\comment{reduction of \lineref{6} with fact}\\
		\lineno{9}&\answerPred(\varA)&\lif&\edge(\id{c}_0,\varA).
		&\comment{last literal resolution of \lineref{8}}\\
		\lineno{10}&\answerPred(\varA)&\lif&\edge(\id{c}_0,\varB)\land\path(\varB,\varA).
		&\comment{last literal resolution of \lineref{8}}\\
	\end{array}\kern-10pt$
\end{quote}
The following states are equivalent to this state. The transition function is the same as for the left recursive program:
\begin{quote}
	\begin{tabular}{llll}
			$\delta$($\state_{0}\!$&, $\edge(1, \varA)$)&$=$&$\state_{1}$\\
			$\delta$($\state_{1}\!$&, $\edge(\id{c}_0, \varA)$)&$=$&$\state_{1}$\\
	\end{tabular}
\end{quote}
\qed
\end{example}

In order to achieve termination, the other cases of infinitely growing states have to be excluded for the time being. For this the notion of the \emph{schema} of a rule is introduced, and the set of \emph{valid} states is restricted to those states that do not contain two rules with the same schema. {\sc Porter} uses a similar definition for a schema in \cite{porter2}.

\begin{definition}[Schema of a Rule]
Let a normalized rule $\Rule$ be given, and let $\mathit{CONST}_{\mathcal{V}}$ be the set of those constants in $R$ that do not occur in the program, which means they are symbolic values of the set $\mathcal{V}$. Let $\mathit{CONST}_{\mathcal{V}}$ be ordered by the occurrence of its elements in $\Rule$: $\mathit{CONST}_{\mathcal{V}} = \lbrace c_{i} \mid i = 0, \ldots , |\mathit{CONST}_{\mathcal{V}}|-1 \rbrace$ where $c_{j} < c_{k}$ iff the first occurrence of $c_j$ is before the first occurrence of $c_k$ and $j < k$ iff $c_{j} < c_{k}$ ($j, k \in \lbrace 0, \ldots , |\mathit{CONST}_{\mathcal{V}}|-1\rbrace$). Let further $\mathcal{B} = \lbrace \id{b}_{i} \mid i \in \bbbn \cup \lbrace 0 \rbrace \rbrace$ be an ordered set of constants: $\id{b}_{j} < \id{b}_{k}$ iff $j < k$ ($j, k \in \lbrace \bbbn \cup \lbrace 0\rbrace$). The schema of $\Rule$ is the rule obtained by replacing every $c_{i} \in \mathit{CONST}_{\mathcal{V}}$ by $\id{b}_{i} \in \mathcal{B}$ ($i \in \lbrace 0, \ldots , |\mathit{CONST}_{\mathcal{V}}|-1\rbrace$).
\end{definition}

\begin{example}
Consider state $\state_1$ of the tail recursive program in Example \ref{ex:res}. The schemata of the rules in this state are as follows: 
\begin{quote}
	$\begin{array}{@{}rlcll@{}}
		\lineno{7}&\answerPred(\id{b}_0).&&
		&\\
		\lineno{8}&\answerPred(\id X_0)&\lif&\path(\id{b}_0,\id X_0).
		&\\
		\lineno{9}&\answerPred(\id X_0)&\lif&\edge(\id{b}_0,\id X_0).
		&\\
		\lineno{10}&\answerPred(\id X_0)&\lif&\edge(\id{b}_0,\id X_1)\land\path(\id X_1,\id X_0).
		&\\
	\end{array}\kern-10pt$
\end{quote}
\qed
\end{example}

\begin{definition}[Valid State]
Let $\state$ be a state 
and $\mathcal{SCH}$ be the set of schemata of the rules in $\state$. The state $\state$ is valid iff $|\mathcal{SCH}| = |\state|$.
\end{definition}

For a given program that meets all requirements mentioned above there are only finitely many possibilities to create valid states, so partial evaluation is guaranteed to terminate.

An implementation of the automaton will use states where the symbolic values are replaced by assignable variables that hold the actual data values. Explicit constants can be included in the target code and no derivations have to be performed so that the runtime states are very compact and the main task at state transitions should be accessing and selecting the data. Different results for the \id{answer} predicate can be obtained by backtracking or by concurrent processing of alternative transitions. 

%===================
\section{Conclusion}
%===================
We have presented an algorithm for efficient query evaluation and preprocessing of func\-tion\--free logic programs based on Earley Deduction. The algorithm can process non\--re\-cur\-sive, left- and tail\--re\-cur\-sive rules and has been proven to be correct, complete, and terminating. While Earley Deduction can in principle be used for arbitrary logic programs, still the basic algorithm presented here is already an improvement to it because it processes several derivations for one fact in one step. There is a special optimization potential
when it can be proven that only one fact is applicable in a state
and we do not have to check whether there is a cycle in the state sequence. Subjects of our future research include further optimizations for special applications and an efficient implementation of the generated automaton. 

Further material, including a demo program (written in SWI Prolog) showing the state sequences for a given program and query, is available at 
\\
\texttt{http://dbs.informatik.uni-halle.de/Earley}.

\bibliographystyle{splncs03}
\bibliography{wlp12.bib}

\end{document}